\newtheorem{theo}{Theorem}[section]
\newtheorem{rem}{Remark}[section]
\newtheorem{defi}{Definition}[section]
\newtheorem{lemma}{Lemma}[section]
\newtheorem{prop}{Proposition}[section]
\newtheorem{ex}{Example}[section]
\def\Z{{\mathbb Z}}
\def\00{{\bf 0}}
\def\11{{\bf 1}}
\def\+{\oplus}
\def\whitebox{{\hbox{\hskip 1pt
        \vrule height 6pt depth 1.5pt
        \lower 1.5pt\vbox to 7.5pt{\hrule width
                  3.2pt\vfill\hrule width 3.2pt}%
        \vrule height 6pt depth 1.5pt
        \hskip 1pt } }}
\def\qed{\ifhmode\allowbreak\else\nobreak\fi\hfill\quad\nobreak\whitebox\medbreak}
\begin{document}
\title{Construction methods for generalized bent functions}

\author{
S. Hod\v zi\'c \footnote {University of Primorska, FAMNIT, Koper, Slovenia, e-mail: samir.hodzic@famnit.upr.si}\and
E.~Pasalic\footnote{University of Primorska, FAMNIT \& IAM, Koper, Slovenia, e-mail: enes.pasalic6@gmail.com}
}

\date{}
\maketitle

\begin{abstract}
Generalized bent (gbent) functions is a class of functions $f: \Z_2^n \rightarrow \Z_q$, where $q \geq 2$ is  a positive integer, that generalizes a concept of classical bent functions through their co-domain extension. A lot of research has recently been devoted towards derivation of the necessary and sufficient conditions when $f$ is represented as a collection  of Boolean functions. Nevertheless, apart from the necessary conditions that these component functions are bent   when $n$ is even (respectively semi-bent when $n$ is odd), no general construction method has been proposed yet for $n$ odd case. In this article, based on the use of the well-known Maiorana-McFarland (MM) class of functions, we give an explicit construction method of gbent functions, for any even $q >2$ when $n$ is even and for any $q$ of the form $q=2^r$ (for $r>1$) when $n$ is odd. Thus, a long-term open problem of  providing a general construction method of gbent functions, for odd $n$, has been solved. The method for odd $n$ employs a large class of disjoint spectra semi-bent functions with certain additional properties which may be useful in other cryptographic applications.

\medskip
\noindent
\textbf{Keywords:} Generalized bent functions, Walsh-Hadamard transform, (generalized) Marioana-McFarland class,  Gray maps.

\end{abstract}

\section{Introduction}

A generalization  of Boolean  functions    was introduced  in \cite{Kum85} for considering a much larger class of mappings from $\mathbb{Z}_q^n$ to $\mathbb{Z}_q$ which naturally induced generalized concepts of the well known class of Boolean {\em bent functions} introduced by Rothaus \cite{RO76}. Nevertheless, due to a more natural connection to cyclic codes over rings, functions from $\mathbb{Z}_2^n$ to $\mathbb{Z}_q$, where $q \geq 2$ is a positive integer, have drawn even more attention \cite{KUSchm2007}.
 This  class of mappings $\mathbb{Z}_2^n$ to $\mathbb{Z}_q$ will be called {\em generalized Boolean functions} throughout this article and in particular its subclass possessing similar properties as standard bent functions will be named {\em generalized bent (gbent) functions}. The relations between
generalized bent functions, constant amplitude codes and $\mathbb{Z}_4$-linear codes ($q=4$) were studied in \cite{KUSchm2007}. There are also other generalizations of bent functions such as bent functions over finite Abelian groups for instance \cite{Solod2002}. A nice survey on different generalizations of bent functions can be found in \cite{Tokareva}.

 There are several reasons for studying generalized bent functions. In the first place there is a close connection of these objects to classical bent functions when $n$ is even. Indeed, using a suitable representation of $f: \mathbb{Z}_2^n \rightarrow \mathbb{Z}_q$ as a collection of  its component Boolean functions (whose number depends on 2-adic representation of $q$), it turns out that the necessary  condition for these component functions is that some of their linear combinations are bent if $f$ is supposed to be gbent. The quaternary $q=4$ and octal case $q=8$ were investigated in \cite{Tok} and \cite{Octal}, respectively. Also, in many other recent works \cite{SecCon, OnCross, BentGbent} the authors mainly consider the case $q=2^h$ and  the bent properties of the component functions for a given prescribed form of a gbent function. On the other hand, when $n$ is odd and $q=2^h$, the necessary (but not sufficient) condition that $f$ is gbent is that some linear combinations of the component functions are   semi-bent Boolean functions with the three valued Walsh spectra $\{0,\pm 2^{\frac{n+1}{2}}\}$.

 The main reason, from an applicative point of view, for the interest in these objects is a close relationship between certain objects used in the design of orthogonal frequency-division multiplexing (OFDM)
modulation technique, which in certain cases suffers from relatively high peak-to-mean envelope power ratio
(PMEPR), and gbent functions. To overcome the issues of having large PMEPR, the $q$-ary sequences lying in complementary pairs \cite{Golay} (also called Golay sequences) having a low PMEPR  can be
easily determined from the gbent function associated with this sequence, see \cite{KUGenRM} and the references therein. Another motivation for studying these objects comes from the fact that Gray maps of gbent functions are plateaued functions, see \cite[Propositions 6-7]{SHWM}. The possibility of obtaining plateaued functions from  gbent functions through Gray maps has an independent cryptographic significance.
Thus,  a generic construction of gbent functions also provides a generic methods for designing  plateaued functions by using the results in \cite{SHWM}.

As mentioned above, general construction methods of gbent functions are not known  apart from a few special cases for some particular (small) valued  $q$.  When $q=4$ and $n$ is even, from \cite{Tok} we have that a function $f:\mathbb{Z}^n_2\rightarrow \mathbb{Z}_4$, given in the form $f(x)=a_0(x)+2a_1(x)$, is gbent if and only if $a_1$ and $a_1\oplus a_0$ are  Boolean bent functions.
Several other results related to the case $q=4$ and $n$ even are given in \cite{KUSchm2007}, where some of them involve the trace forms of Galois rings whose employment is also discussed in \cite{XZ}.
For the octal case $q=8$ both necessary and sufficient conditions for the component functions of $f:\Z_2^n \rightarrow \Z_8$,  representing uniquely $f$ as $f(x)=a_0(x)+2a_1(x)+2^2a_2(x)$ where $a_0,a_1,a_2$ are Boolean functions, were given in \cite{BentGbent}. Some recent results on gbent functions related to the case $q=8$  can  be found in \cite{Octal,WM}. Once again, it is necessary (but not sufficient) that certain linear combinations of these Boolean functions are bent when $n$ is even, respectively semi-bent when $n$ is odd. In addition, the Walsh spectra of these functions must satisfy certain conditions related to Hadamard matrices which  makes the design methods rather involved, cf. Theorem~\ref{th1}.

 Several other more general classes of gbent functions were described in \cite{BentGbent}, such as generalized Maiorana-McFarland class (GMMF) \cite[Theorem 8]{BentGbent}, generalized Dillon class (GD)\cite[Theorem 9]{BentGbent}, partial spread class (PS) \cite{TM3} and generalized spread class (GS) \cite[Theorem 10]{BentGbent}. It has been shown that the GD and GMMF classes are both contained in the GS class \cite[Theorem 12]{BentGbent}.
The construction of these gbent functions was also considered in \cite{OnCross} though form the cross-correlation point of view. Apart from the generic construction method of gbent functions  inherent to the GMMF class though only for even $n$, the other classes only provide sufficient gbent conditions which are not easy to satisfy in an efficient manner.
 Gbent functions of the form $g(x)=\frac{q}{2}a(x)+kb(x),$ $k\in \{\frac{q}{4},\frac{3q}{4}\}$, $q=4s$ ($s\in \mathbb{N}$), were analyzed in \cite{SH}, where it has been shown that certain constructions of gbent functions for $q\in\{4,8\}$ \cite{OnCross,Octal,BentGbent} belong to this class of functions (see \cite[Section 5]{SH}). One may notice that many coordinate functions of the function $g$, when $g$ is written in the form (\ref{eq:1}), are equal to each other or possibly are zero functions.
In difference to this approach our construction method  can generate gbent functions for any even $q$ whose pairwise  coordinate functions are different (see Remark \ref{tau_i}), which implies that many gbent functions which are not of the form $\frac{q}{2}a(x)+kb(x)$ can be generated.

 However, the first general characterization of gbent functions, in terms of the choice of component functions for any even $q$  and regardless of the parity of $n$, was given in \cite[Theorem 4.1]{SH2}.
Based on the necessary and sufficient conditions, which are derived in \cite{SHWM}, in this article we present the fist generic  method for construction of gbent functions for any even $q$ when $n$ is even and for $q=2^r$ when $n$ is odd. The method is based on the use of the Maiorana-McFarland (MM) class of functions which contains both semi-bent and bent functions. Nevertheless, the difficulty lies in the fact that the component functions (more precisely certain linear combinations of them) apart from being bent or semi-bent (depending on the parity of $n$) must satisfy additional constraints. More precisely, when $n$ is odd certain linear combinations of the component functions must be disjoint spectra semi-bent functions and apart from that the signs of their Walsh coefficients are supposed to satisfy certain Hadamard recursion, for more details see Section~\ref{sec:descr}. Therefore, the selection of component functions turns out to be a rather nontrivial task. We efficiently solve this problem by using suitable permutations for deriving  disjoint spectra semi-bent functions from the MM class that satisfy the gbent conditions. The question of finding another generic methods for the same purpose is left as an interesting open problem. We emphasize that  the case $n$ even which is also briefly discussed is of minor importance (due to the generic method provided through the GMMF class) and the main contribution is a novel and efficient method of satisfying rather demanding gbent conditions when $n$ is odd.

The rest of this article is organized as follows.   Some basic definitions and notions related to gbent functions are given in Section~\ref{sec:pre}. In Section~\ref{sec:descr} we describe the problem of constructing gbent functions in terms of the sufficient conditions imposed on their component functions.
A method of deriving disjoint spectra semi-bent functions from the MM class, needed in the design of gbent functions for odd $n$, is given in Section~\ref{sec:semibent}, where the case $n$ even is also briefly discussed. In Section~\ref{sec:genconc}, we illustrate construction details for $n$ odd case. Some concluding remarks are found in Section~\ref{sec:conc}.

\section{Preliminaries}\label{sec:pre}

The set of all Boolean functions in $n$ variables, that is the mappings from $\mathbb{Z}_2^n$ to $\mathbb{Z}_2$ is denoted by $\mathcal{B}_n$. Especially, the set of affine functions in $n$ variables we define as $\mathcal{A}_n=\{a\cdot x\oplus b\;|\;a\in\mathbb{Z}_2^n,\; b\in\{0,1\}\}$, where ``$\cdot$'' stands for the standard inner (dot) product of two vectors.
A function $f:\mathbb{Z}^n_2 \rightarrow \mathbb{Z}_2$   is commonly represented using its associated algebraic normal form (ANF) as
\begin{eqnarray}\label{ANF}
f(x_1,\ldots,x_n)=\sum_{u\in \mathbb{Z}^n_2}\lambda_u\displaystyle\prod_{i=1}^n{x_i}^{u_i},
\end{eqnarray}
where the variables $x_i\in \mathbb{Z}_2$, $(i=1, \ldots, n)$, ${\lambda_u \in \mathbb{Z}_2}$, $u=(u_1, \ldots,u_n)\in \mathbb{Z}_2^n$. The \emph{Walsh-Hadamard transform} (WHT) of $f\in\mathcal{B}_n$ at any point $\omega\in\mathbb{Z}^n_2$ is defined by $$W_{f}(\omega)=\sum_{x\in \mathbb{Z}^n_2}(-1)^{f(x)\oplus \omega\cdot x}.$$
\begin{defi}\cite{CCA}\label{def2}
Two Boolean functions $f, g\in \mathcal{B}_{n}$ are said to be a pair of \emph {disjoint spectra functions} if
\begin{equation*}
 W_f(\omega) W_g(\omega)=0, \hspace{0.2cm} \emph {for all} \hspace{0.2cm} \omega\in \mathbb{Z}^n_2.
\end{equation*}
\end{defi}
An $n$-variable function $f$ from $\mathbb{Z}^n_2$ to $\mathbb{Z}_q$, where $q\geq 2$ a positive integer, is called a \emph{generalized Boolean function}  \cite{Tok}. We denote the set of such functions by $\mathcal{GB}^n_q$ and for  $q=2$ the classical Boolean functions in $n$ variables are obtained.
 Let $\zeta=e^{2\pi i/q}$ be a complex $q$-primitive root of unity. The {\em generalized Walsh-Hadamard transform} (GWHT) of $f\in \mathcal{GB}^q_n$ at any point $\omega\in \mathbb{Z}^n_2$ is the complex valued function $$\mathcal{H}_{f}(\omega)=\sum_{x\in \mathbb{Z}^n_2}\zeta^{f(x)}(-1)^{\omega\cdot x}.$$
A function $f\in \mathcal{GB}^q_n$ is called \emph{generalized bent (gbent)} function if $|\mathcal{H}_f(\omega)|=2^{\frac{n}{2}}$, for all $\omega\in \mathbb{Z}^n_2.$ Clearly, when $q=2$, we obtain the  Walsh transform $W_f$ of $f\in\mathcal{B}_n$.

A $(1,-1)$-matrix $H$ of order $p$ is called a \emph{Hadamard} matrix if  $HH^{T}=pI_p,$ where $H^{T}$ is the transpose of $H$, and $I_p$ is the $p\times p$ identity matrix. A special kind of  Hadamard matrix is the \emph{Sylvester-Hadamard} or \emph{Walsh-Hadamard} matrix, denoted by $H_{2^{k}},$ which is constructed using the Kronecker product $H_{2^{k}}=H_2\otimes H_{2^{k-1}},$ where
\begin{eqnarray*}\label{HM}
H_1=(1);\hskip 0.4cm H_2=\left(
                           \begin{array}{cc}
                             1 & 1 \\
                             1 & -1 \\
                           \end{array}
                         \right);\hskip 0.4cm H_{2^k}=\left(
      \begin{array}{cc}
        H_{2^{k-1}} & H_{2^{k-1}} \\
        H_{2^{k-1}} & -H_{2^{k-1}} \\
      \end{array}
    \right).
\end{eqnarray*}

We take that $\mathbb{Z}^n_2$ is ordered as $$\{(0,0,\ldots,0),(1,0,\ldots,0),(0,1,\ldots,0),\ldots,(1,1,\ldots,1)\},$$
and the vector $z_i=(i_0,\ldots,i_{n-1}) \in \mathbb{Z}^n_2$ is uniquely identified by $i \in \{0,1,\ldots,2^n-1\}$.
For a function $g\in \mathcal{B}_n$, the $(1,-1)$-sequence defined by $((-1)^{g(z_0)},(-1)^{g(z_1)},\ldots,(-1)^{g(z_{2^{n}-1})})$ is called the \emph{sequence} of $g$, where $z_i=(i_0,\ldots,i_{n-1}),$ $i=0,1,\ldots,2^{n}-1,$ denotes the vector in $\mathbb{Z}^n_2$ whose integer representation is $i$, that is, $i=\sum_{j=0}^{n-1}i_j 2^j.$

If $2^{p-1}<q\leq 2^p,$ to any generalized function $f:\mathbb{Z}^n_2\rightarrow \mathbb{Z}_q,$ we may associate a unique sequence of Boolean functions $a_i\in \mathcal{B}_n$ ($i=0,1,\ldots,p-1$) such that
\begin{eqnarray}\label{eq:1}
f(x)=a_0(x)+2a_1(x)+2^2a_2(x)+\ldots+2^{p-1}a_{p-1}(x),\; \forall x\in \mathbb{Z}^n_{2}.
\end{eqnarray}
Throughout the article, we will use the well-known fact
\begin{eqnarray}\label{fact}
\sum_{x\in \mathbb{Z}^n_2}(-1)^{w\cdot x}=\left\{\begin{array}{cc}
                                                    2^n, & \text{if}\;\;\; w=\textbf{0}_{n} \\
                                                    0, & \text{otherwise}
                                                  \end{array}
\right.,
\end{eqnarray}
where $\textbf{0}_{n}$ denotes the all-zero vector in $\mathbb{Z}^n_2.$

\section{Problem description}\label{sec:descr}

An intensive study of gbent functions  has recently resulted in their complete characterization  when $q$ is a power of two \cite{SHWM} (some  partial results are also given  in \cite{CT,TM,TM2}). Since the analysis of gbent functions provided in \cite{SHWM} is far more extensive than those given in \cite{CT,TM,TM2}, in this section we will mainly refer to the results given  there.  More precisely, using the approach based on Hadamard matrices,  it has been shown that gbent functions from  $\mathbb{Z}^n_2$ to $\mathbb{Z}_{2^k}$ in algebraic sense correspond to affine spaces of bent or semi-bent functions with certain properties, when $n$ is even or odd, respectively (cf. \cite[Section 4]{SHWM}). The problem of providing generic construction methods of gbent functions is therefore closely related to fulfilling these conditions efficiently.  We recall the characterization of gbent functions given in \cite{SHWM} (which  can also be found   in \cite{CT}).
\begin{theo}\cite{SHWM}\label{th1}
Let $f(x) = a_0(x)+\cdots+2^{p-2}a_{p-2}(x)+2^{p-1}a_{p-1}(x) \in\mathcal{GB}_n^{2^p}$, and let
$h_i(x)=a_{p-1}(x) \+ z_i\cdot (a_0(x),\ldots,a_{p-2}(x))$, $i\in[0,2^{p-1}-1]=\{0,1,\ldots,2^{p-1}-1\}$, where $z_i=(i_0,\ldots,i_{p-2})\in \mathbb{Z}^{p-1}_2$.
\begin{itemize}
\item[(i)]
If $n$ is even, then $f$ is gbent if and only if $h_i$ is bent for all $0\le i\le 2^{p-1}-1$, such that
for all $u\in \mathbb{Z}^n_2$,
\begin{equation}\label{H1}
\mathbf{\mathcal{W}}(u) = (W_{h_0}(u),W_{h_1}(u),\ldots,W_{h_{2^{p-1}-1}}(u)) = \pm 2^{\frac{n}{2}}H_{2^{p-1}}^{(r)}
\end{equation}
for some $r$, $0\le r\le 2^{p-1}-1$, depending on $u$.
\item[(ii)]
If $n$ is odd, then $f$ is gbent if and only if $h_i$ is semi-bent for all $0\le i\le 2^{p-1}-1$, such that
for all $u\in \mathbb{Z}^n_2$,
\begin{equation}
\label{H2}
\mathbf{\mathcal{W}}(u) = (\pm 2^{\frac{n+1}{2}}H^{(r)}_{2^{p-2}},\textbf{0}_{2^{p-2}})\quad\mbox{or}\quad
\mathbf{\mathcal{W}}(u) = (\textbf{0}_{2^{p-2}},\pm 2^{\frac{n+1}{2}} H^{(r)}_{2^{p-2}})
\end{equation}
for some $r$, $0\le r\le 2^{p-2}-1$, depending on $u$ ($\textbf{0}_{2^{p-2}}$ is the all-zero vector of length $2^{p-2}$).
\end{itemize}
\end{theo}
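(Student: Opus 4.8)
The plan is to convert the constant-modulus condition on the complex transform $\mathcal{H}_f$ into a condition on the real vector $\mathcal{W}(u)=(W_{h_0}(u),\ldots,W_{h_{2^{p-1}-1}}(u))$ by means of a single master identity. First I would isolate the top component $a_{p-1}$ and Fourier-expand the rest: writing $b(x)=(a_0(x),\ldots,a_{p-2}(x))$, the equalities $\zeta^{2^{p-1}}=-1$ and $\zeta^{f(x)}=(-1)^{a_{p-1}(x)}\zeta^{\mathrm{int}(b(x))}$ together with the expansion of $b\mapsto\zeta^{\mathrm{int}(b)}$ into the characters $(-1)^{z\cdot b}$ of $\Z_2^{p-1}$ give
\begin{equation*}
\mathcal{H}_f(u)=\sum_{i=0}^{2^{p-1}-1}c_{z_i}\,W_{h_i}(u),\qquad c_{z_i}=\frac{1}{2^{p-1}}\sum_{r=0}^{2^{p-1}-1}(-1)^{z_i\cdot z_r}\,\zeta^{r}.
\end{equation*}
The decisive observation is that the coefficient vector $\mathbf{c}=(c_{z_i})_i$ is, up to the factor $2^{-(p-1)}$, the Sylvester--Hadamard transform of the root vector $(\zeta^{0},\ldots,\zeta^{2^{p-1}-1})$; since $H_{2^{p-1}}^{2}=2^{p-1}I$, this is equivalent to $\langle\mathbf{c},H_{2^{p-1}}^{(r)}\rangle=\zeta^{r}$, and hence, writing $S_r(u)=(H_{2^{p-1}}\mathcal{W}(u)^{T})_r$,
\begin{equation*}
\mathcal{H}_f(u)=\frac{1}{2^{p-1}}\sum_{r=0}^{2^{p-1}-1}\zeta^{r}\,S_r(u).
\end{equation*}

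With this identity the sufficiency direction is a short computation in both parities. In case (i), if $\mathcal{W}(u)=\pm 2^{n/2}H_{2^{p-1}}^{(r)}$ then $H_{2^{p-1}}H_{2^{p-1}}^{T}=2^{p-1}I$ collapses $S(u)$ to a single nonzero coordinate, $S_r(u)=\pm 2^{n/2}2^{p-1}$, whence $\mathcal{H}_f(u)=\pm 2^{n/2}\zeta^{r}$ has modulus $2^{n/2}$. In case (ii) the two prescribed forms, through the block structure of $H_{2^{p-1}}$, make $S(u)$ supported on the sign-paired indices $r$ and $r+2^{p-2}$ with magnitude $2^{(n+1)/2+p-2}$; using $\zeta^{2^{p-2}}=i$ one gets $\zeta^{r}\pm\zeta^{r+2^{p-2}}=\zeta^{r}(1\pm i)$ of modulus $\sqrt{2}$, and the exponent bookkeeping yields $|\mathcal{H}_f(u)|=2^{(n+1)/2+p-2}\sqrt{2}\,/\,2^{p-1}=2^{n/2}$. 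This already explains the bent/semi-bent dichotomy: it reflects full support with values $\pm 2^{n/2}$ in the even case versus half support with values $0,\pm 2^{(n+1)/2}$ in the odd case.

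The substantial direction is necessity. Here I would combine the master identity with Parseval and with the (classical) necessary condition that the $h_i$ are bent, respectively semi-bent, so that every entry of $\mathcal{W}(u)$ lies in $\{\pm 2^{n/2}\}$, respectively $\{0,\pm 2^{(n+1)/2}\}$, and then prove the rigidity statement that, for such integer vectors, $|\sum_r\zeta^{r}S_r(u)|=2^{n/2+p-1}$ forces $S(u)$ to have the single-support pattern of (i), respectively the sign-paired two-support pattern of (ii); inverting $H_{2^{p-1}}$ then returns exactly \eqref{H1} and \eqref{H2}. The natural way to organize the rigidity is an induction on $p$ with the $q=4$ computation $|\mathcal{H}_f(u)|^{2}=\tfrac12\big(W_{h_0}(u)^{2}+W_{h_1}(u)^{2}\big)$ as base case: peeling off the most significant low component $a_{p-2}$, whose weight $\zeta^{2^{p-2}}=i$ splits $\mathcal{H}_f$ into a real and an imaginary part governed by two complementary sub-collections of the $h_i$, the requirement that a sum of two squares be constant reproduces the recursive block structure of $H_{2^{p-1}}$.

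I expect this rigidity step to be the main obstacle. A single scalar equation $|\mathcal{H}_f(u)|=2^{n/2}$ cuts out only a hypersurface in the space of admissible Walsh vectors, so forcing $\mathcal{W}(u)$ to be a scaled signed Hadamard row genuinely requires the integrality of the Walsh values, the global constraints $\sum_u W_{h_i}(u)^{2}=2^{2n}$, and the coupling across different $u$ supplied by the recursion. In the odd case there is the further subtlety of deciding, for each individual $u$, which of the two half-blocks in \eqref{H2} carries the nonzero entries.
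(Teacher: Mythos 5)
Your framework is the right one, and half of it is solid: the master identity $\mathcal{H}_f(u)=\sum_i c_{z_i}W_{h_i}(u)=2^{-(p-1)}\sum_r \zeta^r S_r(u)$ with $S(u)=H_{2^{p-1}}\mathcal{W}(u)^T$, and the sufficiency computations in both parities (one-point support of $S(u)$ for $n$ even; paired support at $r$ and $r+2^{p-2}$ with $|1\pm i|=\sqrt{2}$ for $n$ odd), are correct and essentially the same change of basis used in \cite{SHWM} -- note that the present paper does not prove Theorem \ref{th1} at all but quotes it from \cite{SHWM}, so that is the proof to compare against. The necessity direction, however, has two genuine gaps. The first is circularity: you import as a ``classical necessary condition'' that every $h_i$ is bent (resp.\ semi-bent) whenever $f$ is gbent. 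That claim is itself part of the ``only if'' assertion being proved, and in this generality it is not classical: it was established essentially simultaneously with this theorem, in \cite{SHWM}, \cite{CT} and \cite{TM}, by the same algebraic machinery that yields the Hadamard structure. Assuming it outsources the hardest ingredient of the proof.

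The second gap is that your proposed induction mechanism for the rigidity step fails beyond its base case. For $p\geq 3$, peeling off $a_{p-2}$ (weight $\zeta^{2^{p-2}}=i$) gives $\mathcal{H}_f=P+iQ$ with $P=\tfrac{1}{2}(\mathcal{H}_{g_0}+\mathcal{H}_{g_1})$ and $Q=\tfrac{1}{2}(\mathcal{H}_{g_0}-\mathcal{H}_{g_1})$, where $g_0=a_0+\cdots+2^{p-3}a_{p-3}+2^{p-1}a_{p-1}$ and $g_1=a_0+\cdots+2^{p-3}a_{p-3}+2^{p-1}(a_{p-1}\oplus a_{p-2})$. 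These sub-transforms still carry the complex factor $\zeta^{a_0+\cdots+2^{p-3}a_{p-3}}$, so $P$ and $Q$ are \emph{not} real, and $|\mathcal{H}_f|^2=|P|^2+|Q|^2+2\,\mathrm{Im}(P\overline{Q})$: the sum-of-two-squares structure on which your whole recursion rests is destroyed by the cross term. It is available at $q=4$ only because there $P=\tfrac{1}{2}(W_{h_0}+W_{h_1})$ and $Q=\tfrac{1}{2}(W_{h_0}-W_{h_1})$ are genuinely real. The proofs in the cited literature avoid this by peeling off the \emph{least} significant component $a_0$ (weight $\zeta$), so that the recursion descends the tower $\mathbb{Q}(\zeta_{2^p})\supset\mathbb{Q}(\zeta_{2^{p-1}})\supset\cdots$ and the cross terms are controlled by the relative Galois conjugation $\zeta\mapsto-\zeta$; equivalently, \cite{SHWM} first proves the regularity statement that a gbent $f\in\mathcal{GB}_n^{2^p}$ satisfies $\mathcal{H}_f(u)=\pm 2^{\frac{n}{2}}\zeta^{r}$ for $n$ even, resp.\ $\mathcal{H}_f(u)=\pm 2^{\frac{n-1}{2}}(1+i)\zeta^{r}$ for $n$ odd (this is where the arithmetic of $\mathbb{Z}[\zeta]$ enters, via the unique prime $1-\zeta$ above $2$), and then the one-point, resp.\ two-point, support of $S(u)$ drops out of the uniqueness of coordinates in the power basis $\{1,\zeta,\ldots,\zeta^{2^{p-1}-1}\}$ of $\mathbb{Z}[\zeta]$; inverting $H_{2^{p-1}}$ then gives exactly (\ref{H1}) and (\ref{H2}). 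Integrality of the Walsh values plus Parseval, which is all your sketch invokes, does not by itself deliver this dichotomy -- as you yourself suspected when flagging the rigidity step as the main obstacle.
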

\begin{rem}
In Theorem \ref{th1} the condition $(\ref{H1})$ ($n$ is even) means that any vector $\mathcal{W}(u)=(W_{h_0}(u),\ldots,W_{h_{2^{p-1}-1}}(u))$ must be equal to some row (vector) $H^{(r)}_{2^{p-1}}$ of the Hadamard matrix $H_{2^{p-1}}$ multiplied with $\pm 2^{\frac{n}{2}},$ for all $u\in \mathbb{Z}^n_2.$ For odd $n$, the condition $(\ref{H2})$ implies that  the first (alternatively the second) half  of the vector $\mathcal{W}(u)$ is equal to some row of the Hadamard matrix $H_{2^{p-2}}$ multiplied by $\pm 2^{\frac{n+1}{2}}$, whereas the second (alternatively the first) half equals to all-zero vector $\textbf{0}_{2^{p-2}}$.
\end{rem}
The above result implies that the problem of constructing gbent functions is equivalent to finding an affine space of the coordinate functions $\Lambda=a_{p-1}(x)\+ \langle a_0(x),\ldots,a_{p-2}(x)\rangle$ (corresponding to $h_i(x)$) which are all bent (or semi-bent if $n$ is odd) functions and in addition satisfying the relation (\ref{H1}) (alternatively (\ref{H2}) if $n$ is odd). The analysis given in \cite{SHWM} indicates that these properties are not easy to satisfy and a trivial approach is to select most of the coordinate functions to be constant or affinely related to each other. In the extreme case, one may, for even $n$,  specify $a_0(x)=\ldots =a_{p-2}(x)=0$ so that $\Lambda=a_{p-1}(x)$, thus reducing the dimension of $\Lambda$ to be zero.

The difficulty of constructing  gbent functions, thus satisfying  (\ref{H1}) or (\ref{H2}), is closely related to certain equivalent conditions given recently in \cite{SHWM}.
According to \cite[Corollary 2]{SHWM} the relation (\ref{H1}), for even  $n$, can be  equivalently stated as follows:  for any three distinct integers $i,j,k\in \{0,\ldots,2^{p-1}-1\}$, it must hold that $h_i h_j\+ h_i h_k\+ h_j h_k$ is a bent function \footnote{For shortness of notation we usually drop the variables, thus writing $h_i$ instead of $h_i(x)$}, where $h_i,h_j,h_k\in \Lambda$ and the functions $h_l$ are defined as in Theorem~\ref{th1}. Then, the fact that   $h_i h_j\+ h_i h_k\+ h_j h_k$ is bent if and only if $h^*_i\+ h^*_j\+h^*_k=(h_i\+h_j\+h_k)^*$ \cite[Theorem 4]{Sihem} clearly indicates the hardness of the imposed conditions. Indeed, the dual of a sum of bent functions is in general not equal to the sum of duals of these functions, except in the cases when  these functions are affinely related  to each other (thus $h_i=h_j \+ g$, where $g$ is an affine function)  \cite[Proposition 3]{Decom}.
A trivial method for satisfying these conditions, as indicated in \cite[Example 3]{SH2}, is to select certain functions to be constant which then  significantly limits the number of choices and consequently the cardinality of $\mathcal{GB}^n_q$ is quite small.

The case  $n$ being odd appears to be even harder since apart from finding an affine space $\Lambda$ of semi-bent functions, the condition (\ref{H2}) also implicitly involves the disjoint spectra property. More precisely, for any two integers $i\in[0,2^{p-2}-1]$ and $j\in[2^{p-2},2^{p-1}-1]$ it must hold that $W_{h_i}(u)W_{h_j}(u)=0,$ for any $u\in \mathbb{Z}^n_2,$ that is,  $h_i=a_{p-1}\oplus z_i\cdot (a_0,\ldots,a_{p-2})$ and  $h_j=a_{p-1}\oplus z_j\cdot (a_0,\ldots,a_{p-2})$ are disjoint spectra semi-bent functions.
Moreover, as observed in \cite[Example 3]{SH2}, a trivial selection of coordinate semi-bent functions is not possible in this case since specifying some of these coordinate functions to be  constant would violate the equality $W_{h_i}(u)W_{h_j}(u)=0,$ which needs to be satisfied for any two integers $i\in[0,2^{p-2}-1]$ and $j\in[2^{p-2},2^{p-1}-1]$.

The above discussion demonstrates the hardness of the underlying problem and also motivates the need for some efficient and generic construction methods of gbent functions, which is the main objective of this article.
 Since the $n$ odd case appears to be  more difficult then the $n$ even case, we  focus on the construction of semi-bent functions $h_i=a_{p-1}\oplus z_i\cdot (a_0,\ldots,a_{p-2}),$ $i\in[0,2^{p-1}-1],$  satisfying the condition (\ref{H2}) along with the mentioned disjoint spectra property. Even though our proposed construction method  for odd $n$ can be easily adopted to cover  the $n$  even case, the latter case is just briefly mentioned because the GMMF class provides an efficient and generic construction method.

\section{Construction of gbent functions using MM class}\label{sec:semibent}

In this section, we describe an efficient method (based on a subtle employment of the MM class) for specifying disjoint spectra semi-bent functions satisfying the gbent conditions given by (\ref{H2}).

\subsection{Disjoint spectra semi-bent functions in the MM class}

Since our method utilizes the well-known MM-class of functions, we start with the definition of this class. For $x\in \mathbb{Z}^{s}_2$ and $y\in \mathbb{Z}^v_2$, let $g:\mathbb{Z}^{s+v}_2\rightarrow \mathbb{Z}_2$ be defined  as
$$g(x,y)=\phi (x)\cdot y\oplus d(x),$$
where  $\phi:\mathbb{Z}^s_2 \rightarrow \mathbb{Z}^v_2$ and $d\in \mathcal{B}_v$ is an arbitrary function. Then, the function $g$ belongs to the MM-class which can also be represented as a concatenation of affine functions ($g$ is an affine function for any fixed $x$).
It is well-known that if $\phi:\mathbb{Z}^s_2 \rightarrow \mathbb{Z}^v_2$ is injective then the Walsh spectra of $g$ is three-valued and $W_g(u) \in \{0,\pm 2^{v}\}$, for any $u \in \mathbb{Z}^{v+s}_2$. In particular, when $n=2k+1$ is odd then for $v=k$ and  $s=k+1$ the function $g$ is a semi-bent function.

For our purpose, we are interested in finding a set of semi-bent functions such that certain linear combinations of these have the property of being disjoint spectra semi-bent functions. Therefore, we introduce  a useful classification of these functions in terms of disjoint image sets of the mapping $\phi$.
Let $n=2k+1$ be an odd positive integer and $\pi:\mathbb{Z}^k_2\rightarrow \mathbb{Z}^k_2$ be an arbitrary mapping.  We can define $\phi : \mathbb{Z}^k_2 \rightarrow \mathbb{Z}^{k+1}_2$ so that one coordinate is fixed, where without  loss of generality (and to avoid complicated notation) we assume that the first coordinate is fixed so that $\phi_j:\mathbb{Z}^k_2\rightarrow \mathbb{Z}^{k+1}_2$, for $j=0,1$, is defined as:
\begin{equation} \label{eq:phi}
x \stackrel{\phi_0}\mapsto (0,\pi(x)), \;\;\;x \stackrel{\phi_1}\mapsto (1,\pi(x)),
\end{equation}
where $\pi:\mathbb{Z}^k_2\rightarrow \mathbb{Z}^k_2$.
Then,  if $\pi$ is a permutation  the function
\begin{equation}\label{eq:gsemi}
 g^{(j)}_{\pi}(x,y)=\phi_j(x)\cdot y \oplus d(x),\;\;\; x \in \Z_2^k, \;\;\; y \in \Z_2^{k+1},
\end{equation}
is a semi-bent function (since $\phi_j$ is injective), for $j=0,1$.
Having defined $\phi_j,$ $j\in\{0,1\},$ through the mapping $\pi$ we now introduce two sets that distinguish the semi-bent property with respect to $\pi$,
\begin{eqnarray}\label{Pi}
P^{(j)}_{n}=\{g^{(j)}_{\pi}:\mathbb{Z}^{k}_2\times\mathbb{Z}^{k+1}_2\rightarrow \mathbb{Z}_2\; \mid d(x)=0\;\;\text{and}\;\; \pi \;\;\text{is a permutation on}\;\;\mathbb{Z}^{k}_2\},
\end{eqnarray}
and
\begin{eqnarray}\label{Ri}
R^{(j)}_{n}=\{g^{(j)}_{\pi}:\mathbb{Z}^{k}_2\times\mathbb{Z}^{k+1}_2\rightarrow \mathbb{Z}_2\; \;\mid d(x)=0\;\;\text{and}\;\; \pi\;\;\text{is not a permutation on}\;\;\mathbb{Z}^{k}_2\}.
\end{eqnarray}
In the sets $P^{(j)}_{n}$ and $R^{(j)}_{n}$ the functions  $g^{(j)}_{\pi}$ are defined by (\ref{eq:gsemi}), where (for simplicity of notation used later) we assign $d(x)=0$  so that  $g^{(j)}_{\pi}=\phi_j(x)\cdot y,$ for $j\in\{0,1\}.$
For more clarity, we illustrate this method in the following example.
\begin{ex}\label{ex1}
Let us for $n=2k+1=5$ ($k=2$) construct a semi-bent function in $P_5^{(1)}$. We define the mapping $\phi_1(x)=(1,\pi(x))$ for $x\in \mathbb{Z}^2_2$ as
$$\phi_1(00)=(\textnormal{\bf{1}},0,1), \; \phi_1(10)=(\textnormal{\bf{1}},0,0),\; \phi_1(01)=(\textnormal{\bf{1}},1,0),\; \phi_1(11)=(\textnormal{\bf{1}},1,1),$$
where $\pi$ is obviously a permutation on $\mathbb{Z}^2_2$. Taking  $d(x)=0$  in (\ref{eq:gsemi}), the four subfunctions (obtained by fixing $x \in \mathbb{Z}_2^2$) are then:
$$g^{(1)}_{\pi}(0,0,y)=y_0\oplus y_2; \; \;g^{(1)}_{\pi}(1,0,y)=y_0; \; \; g^{(1)}_{\pi}(0,1,y)=y_0\oplus y_1; \; \;g^{(1)}_{\pi}(1,1,y)=y_0\oplus y_1 \oplus y_2.$$
Thus, the function $g^{(1)}_{\pi}(x,y)=\phi_1(x)\cdot y$ belongs to the set $P^{(1)}_{5}.$
\end{ex}

However, the signs of  Walsh coefficients in linear combinations of the coordinate functions are also  of great importance due to the fact that, for any $u \in \mathbb{Z}^{n}_2$, in relation (\ref{H2}) for either the first half of the vector $\mathcal{W}(u)$ it holds that
\begin{eqnarray}\label{e1}
(W_{h_0}(u),\ldots,W_{h_{2^{p-2}-1}}(u))=\pm 2^{\frac{n+1}{2}}H^{(r)}_{2^{p-2}},\;\;\; r\in[0,2^{p-2}-1],
\end{eqnarray}
or alternatively for the second half we have
\begin{eqnarray}\label{e2}
(W_{h_{2^{p-2}}}(u),\ldots,W_{h_{2^{p-1}-1}}(u))=\pm 2^{\frac{n+1}{2}}H^{(r)}_{2^{p-2}},\;\;\; r\in[0,2^{p-2}-1].
\end{eqnarray}
The following result is proved useful in determining the signs of non-zero Walsh coefficients for semi-bent functions in $P^{(j)}_n.$
\begin{prop}\label{pp0}
Let $g^{(j)}_{\pi}=\phi_j(x)\cdot y,$ be an arbitrary semi-bent function in $P^{(j)}_n,$ where $j\in\{0,1\}$, $n=2k+1$, and $\phi_j$ is given by (\ref{eq:phi}). Then, denoting $\omega_2 \in \Z_2^{k+1}$ by $(t,\omega_2') \in \Z_2 \times \Z_2^k$, for $t\in \{0,1\}$, we have
\begin{eqnarray}\label{DS}
W_{g^{(j)}_{\pi}}(\omega_1,\omega_2)=\left\{\begin{array}{cc}
                                        (-1)^{\omega_1\cdot \pi^{-1}(\omega_2')}\;2^{\frac{n+1}{2}}, & t=j \\
                                        0, & t \neq j
                                      \end{array}
\right.,\;\;\; \forall (\omega_1,\omega_2)\in \mathbb{Z}^{k}_2\times \mathbb{Z}^{k+1}_2.
\end{eqnarray}
\end{prop}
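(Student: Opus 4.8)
The plan is to compute the Walsh-Hadamard transform of $g^{(j)}_{\pi}$ directly from its definition, exploiting the fact that $g^{(j)}_{\pi}(x,y)=\phi_j(x)\cdot y$ is affine (indeed linear) in $y$ once $x$ is fixed. First I would expand
$$W_{g^{(j)}_{\pi}}(\omega_1,\omega_2)=\sum_{x\in\Z_2^k}\sum_{y\in\Z_2^{k+1}}(-1)^{\phi_j(x)\cdot y\oplus\omega_1\cdot x\oplus\omega_2\cdot y},$$
and then rewrite the exponent as $\omega_1\cdot x\oplus(\phi_j(x)\oplus\omega_2)\cdot y$ so as to separate the two summations. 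Since the only $y$-dependence now sits in the linear term $(\phi_j(x)\oplus\omega_2)\cdot y$, the inner sum over $y$ collapses by the orthogonality relation (\ref{fact}): it equals $2^{k+1}$ precisely when $\phi_j(x)\oplus\omega_2=\textbf{0}_{k+1}$, i.e.\ $\phi_j(x)=\omega_2$, and vanishes otherwise.

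This reduces the transform to $2^{k+1}\sum_{x:\,\phi_j(x)=\omega_2}(-1)^{\omega_1\cdot x}$. The next step is to analyze the equation $\phi_j(x)=\omega_2$ using the prescribed structure $\phi_j(x)=(j,\pi(x))$ together with the splitting $\omega_2=(t,\omega_2')$. Matching the fixed first coordinate forces $t=j$; when $t\neq j$ no $x$ can satisfy the equation, the indexing set is empty, and the transform is $0$, yielding the second case of (\ref{DS}). When $t=j$ the remaining requirement is $\pi(x)=\omega_2'$, and here I would invoke that $\pi$ is a permutation, so there is exactly one preimage $x=\pi^{-1}(\omega_2')$ contributing to the sum.

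Substituting this unique $x$ gives $W_{g^{(j)}_{\pi}}(\omega_1,\omega_2)=2^{k+1}(-1)^{\omega_1\cdot\pi^{-1}(\omega_2')}$, and recalling $n=2k+1$ so that $k+1=\frac{n+1}{2}$ produces the claimed value $(-1)^{\omega_1\cdot\pi^{-1}(\omega_2')}\,2^{\frac{n+1}{2}}$ in the case $t=j$. There is no genuinely hard step: the argument is a single application of orthogonality followed by a case distinction on $t$. The only point requiring care is the bookkeeping of the fixed-coordinate index, namely recognizing that pinning the first coordinate of $\phi_j$ to the value $j$ is exactly what confines the nonzero Walsh values to the half $t=j$ (the behavior that will later supply the disjoint spectra property), while the permutation hypothesis on $\pi$ is precisely what guarantees a unique preimage and hence the clean sign factor $(-1)^{\omega_1\cdot\pi^{-1}(\omega_2')}$.
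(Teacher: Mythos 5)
Your proof is correct and follows essentially the same route as the paper's: split the Walsh sum into an outer sum over $x$ and an inner sum over $y$, collapse the inner sum by orthogonality to the condition $(j,\pi(x))\oplus\omega_2=\mathbf{0}$, and then use the permutation property of $\pi$ to extract the unique preimage $x=\pi^{-1}(\omega_2')$ when $t=j$ (and emptiness when $t\neq j$). No discrepancies to report.
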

\begin{proof}
 For any $(\omega_1,\omega_2)\in \mathbb{Z}^{k}_2\times\mathbb{Z}^{k+1}_2$, the coefficient $W_{g^{(j)}_{\pi}}(\omega_1,\omega_2)$ can be written as
\begin{eqnarray*}\label{walsh}
W_{g^{(j)}_{\pi}}(\omega_1,\omega_2)&=&\sum_{(x,y)\in \mathbb{Z}^{k}_2\times\mathbb{Z}^{k+1}_2}(-1)^{g^{(j)}_{\pi}(x,y)\oplus (x,y)\cdot (\omega_1,\omega_2)}=\sum_{x\in \mathbb{Z}^{k}_2}(-1)^{x\cdot\omega_1}\sum_{y\in \mathbb{Z}^{k+1}_2}(-1)^{g^{(j)}_{\pi}(x,y)\oplus y\cdot\omega_2} \\
&=& \sum_{x\in \mathbb{Z}^{k}_2}(-1)^{x\cdot\omega_1}\sum_{y\in \mathbb{Z}^{k+1}_2}(-1)^{(j,\pi(x))\cdot y \oplus y\cdot\omega_2}=\sum_{x\in \mathbb{Z}^{k}_2}(-1)^{x\cdot\omega_1}\sum_{y\in \mathbb{Z}^{k+1}_2}(-1)^{((j,\pi(x)) \oplus \omega_2)\cdot y}.
\end{eqnarray*}
The last sum equals zero for any $x \in \Z_2^k$, unless $(j,\pi(x)) \oplus \omega_2=0$ in which case the sum equals $2^{k+1}=2^{\frac{n+1}{2}}$. Using the fact that $\pi$ is a permutation, the  condition $(j,\pi(x)) \oplus \omega_2=(j \oplus t,\pi(x) \oplus \omega_2')={\bf 0}$ is satisfied for $t=j$ and a unique $x$ given by $x=\pi^{-1}(\omega_2')$. \qed
\end{proof}
\begin{rem}\label{permutation}
Notice that taking two functions $g^{(j)}_{\pi}, g^{(j)}_{\sigma} \in R^{(j)}_{n}$ so that $\pi,\sigma$ are not permutations, we may still have the property that $\pi \oplus \sigma$ is a permutation in which case $g^{(j)}_{\pi}\oplus g^{(j)}_{\sigma}$ is a semi-bent function.
\end{rem}
Apart from Proposition \ref{pp0}, one can easily construct disjoint spectra semi-bent functions as follows.
\begin{prop}\label{pp2}
Let $f_{\pi}\in P^{(j)}_{n},$ $j\in\{0,1\}$,  and $g_{\sigma}$ belong either to $P^{(1)}_{n}$ or to $R^{(1)}_{n}$.  If $\pi\oplus \sigma$ is a permutation on $\mathbb{Z}^k_2$, then $f_{\pi}\oplus g_{\sigma}$ is a semi-bent function and the functions $f_{\pi}$ and $f_{\pi}\oplus g_{\sigma}$ are disjoint spectra semi-bent functions.
\end{prop}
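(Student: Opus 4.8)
The plan is to show that $f_{\pi}\oplus g_{\sigma}$ is again a function of the form \eqref{eq:gsemi}, and then to read off both its semi-bentness and the disjoint spectra property directly from Proposition~\ref{pp0}. Writing $f_{\pi}=g^{(j)}_{\pi}$ and $g_{\sigma}=g^{(1)}_{\sigma}$ (recall both are taken with $d(x)=0$), the definition of $\phi_j$ in \eqref{eq:phi} gives $f_{\pi}(x,y)=(j,\pi(x))\cdot y$ and $g_{\sigma}(x,y)=(1,\sigma(x))\cdot y$. Adding these over $\mathbb{Z}_2$ and using linearity of the inner product in its first argument, I would obtain
\[
f_{\pi}(x,y)\oplus g_{\sigma}(x,y)=\bigl((j,\pi(x))\oplus(1,\sigma(x))\bigr)\cdot y=\bigl(j\oplus 1,\;(\pi\oplus\sigma)(x)\bigr)\cdot y.
\]
Setting $\tau=\pi\oplus\sigma$, this is precisely $g^{(j\oplus 1)}_{\tau}(x,y)=\phi_{j\oplus 1}(x)\cdot y$ in the notation of \eqref{eq:phi}, with fixed first coordinate $j\oplus 1$ and with $\tau$ playing the role of the permutation.

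The first conclusion is then immediate: by hypothesis $\tau=\pi\oplus\sigma$ is a permutation on $\mathbb{Z}_2^k$, so $\phi_{j\oplus 1}$ is injective and hence $f_{\pi}\oplus g_{\sigma}=g^{(j\oplus 1)}_{\tau}$ is a semi-bent function lying in $P^{(j\oplus 1)}_n$, exactly as noted after \eqref{eq:gsemi} (and consistently with Remark~\ref{permutation}).

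For the disjoint spectra property I would apply Proposition~\ref{pp0} to the two functions separately. Since $f_{\pi}\in P^{(j)}_n$, writing $\omega_2=(t,\omega_2')\in\mathbb{Z}_2\times\mathbb{Z}_2^k$, the coefficient $W_{f_{\pi}}(\omega_1,\omega_2)$ is nonzero only when $t=j$. Since $f_{\pi}\oplus g_{\sigma}\in P^{(j\oplus 1)}_n$, the coefficient $W_{f_{\pi}\oplus g_{\sigma}}(\omega_1,\omega_2)$ is nonzero only when $t=j\oplus 1$. Because $j\neq j\oplus 1$, these two supports, indexed by the single bit $t$, are disjoint, so for every $\omega=(\omega_1,\omega_2)$ at most one of the two Walsh coefficients is nonzero. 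Hence $W_{f_{\pi}}(\omega)\,W_{f_{\pi}\oplus g_{\sigma}}(\omega)=0$ for all $\omega\in\mathbb{Z}_2^n$, which is the disjoint spectra condition of Definition~\ref{def2}.

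The argument is short, and the only genuinely delicate point is the bookkeeping of the fixed first coordinate: one must observe that summing the two MM functions flips the fixed bit from $j$ to $j\oplus 1$ (regardless of whether $j=0$ or $j=1$), so that $f_{\pi}$ and $f_{\pi}\oplus g_{\sigma}$ sit in the complementary classes $P^{(j)}_n$ and $P^{(j\oplus 1)}_n$. It is exactly this complementarity, together with the support description in Proposition~\ref{pp0}, that forces the Walsh spectra to be disjoint; the permutation hypothesis on $\pi\oplus\sigma$ is what guarantees that the sum is semi-bent in the first place.
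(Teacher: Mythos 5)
Your proof is correct and takes essentially the same route as the paper's: both rewrite $f_{\pi}\oplus g_{\sigma}$ in MM form as $\bigl(j\oplus 1,(\pi\oplus\sigma)(x)\bigr)\cdot y$, deduce semi-bentness (i.e., membership in $P^{(j\oplus 1)}_n$) from the hypothesis that $\pi\oplus\sigma$ is a permutation, and then get the disjoint spectra property from Proposition~\ref{pp0}, since the Walsh supports of $f_{\pi}\in P^{(j)}_n$ and $f_{\pi}\oplus g_{\sigma}\in P^{(j\oplus 1)}_n$ lie on complementary values of the bit $t$. Your write-up is in fact more careful than the paper's rather terse version, which contains a couple of notational slips but rests on exactly the same two observations.
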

\begin{proof}
If $\pi\oplus \sigma$ is a permutation on $\mathbb{Z}^k_2$, then clearly functions $f_{\pi}$ and $f_{\pi}\oplus g_{\sigma}$ are semi-bent functions, since $f_{\pi}\in P^{(j)}_{n}$ and $f_{\pi}\oplus g_{\sigma}$ is given as
$$ f_{\sigma}(x,y)\oplus g_{\pi}(x,y)=((i,\sigma(x)) \oplus (j,\pi(x))) \cdot y= ((i \oplus j,\sigma(x) \oplus \pi(x)) \cdot y,$$ for $i,j \in \{0,1\}$.
Furthermore, if $f_{\pi}\in P^{(j)}_{n},$ $j\in\{0,1\}$, and $g_{\sigma}\in P^{(1)}_{n}$ or $g_{\sigma}\in R^{(1)}_{n}$, then $f_{\pi}\oplus g_{\sigma}\in P^{(1\oplus j)}_n.$ The disjoint spectra property follows trivially  from Proposition \ref{pp0}.
\qed
\end{proof}
The primary condition in Theorem \ref{th1}-(ii) is that the component functions $a_0,\ldots,a_{p-2},a_{p-1}\in \mathcal{B}_n$  are selected so that $h_i=a_{p-1}\oplus z_i\cdot (a_0,\ldots,a_{p-2})$ is a  semi-bent function, for any $i\in[0,2^{p-1}-1].$ Especially, when $i=0$ this implies that $a_{p-1}$ has to be  a  semi-bent function, hence it can be chosen from the set $P^{(j)}_{n}$.
Recall that the vector $\mathcal{W}(u)$ at point $u\in \mathbb{Z}^n_2$ is given as
$$\mathcal{W}(u)=(W_{h_0}(u),\ldots,W_{h_{2^{p-2}-1}}(u),W_{h_{2^{p-2}}}(u),\ldots,W_{h_{2^{p-1}-1}}(u)),$$
and accordingly the WHTs of  $h_i$, for $i\in[0,2^{p-1}-1],$ constitute the first half of $\mathcal{W}(u),$ more precisely $(W_{h_0}(u),\ldots,W_{h_{2^{p-2}-1}}(u))$ which does not involve the function $a_{p-2}$. Nevertheless, this function cannot be arbitrary chosen (for instance cannot be constant) since its presence in $h_j$ when $j\in[2^{p-2},2^{p-1}-1]$ directly affects the disjoint spectra property through $W_{h_i}(u)W_{h_j}(u)=0$.

\subsection{Non-trivial selection of component functions, $n$ odd}\label{sec:nontrivial}

%

We now discuss a suitable selection of the  coordinate functions $a_{p-1},a_0,\ldots,a_{p-2}$ from the sets $P^{(j)}_n$ and/or $R^{(j)}_n$. These sets being closely related to mappings over $\mathbb{Z}^k_2$,  to every coordinate function $a_{p-1},a_0,\ldots,a_{p-2}$ we associate the mappings $\sigma,\tau_0,\ldots,\tau_{k-2}: \mathbb{Z}^k_2 \rightarrow \mathbb{Z}^k_2$ as follows:
\begin{eqnarray}\label{a}
a_{p-1}(x,y)=(j_{p-1},\sigma(x))\cdot y,\; \; a_l(x,y)=(j_l,\tau_l(x))\cdot y,\;\;\;(x,y)\in \mathbb{Z}^k_2\times \mathbb{Z}^{k+1}_2,
\end{eqnarray}
where $j_l\in\{0,1\}$ and $l\in[0,p-2]$. Furthermore, let
\begin{eqnarray}\label{pi}
\pi_i=\sigma\oplus z_i\cdot (\tau_0,\ldots,\tau_{p-2}),
\end{eqnarray}
denote linear combinations of $\sigma, \tau_0,\ldots,\tau_{p-2}$, for $i\in[0,2^{p-1}-1]$, where $\pi_i :\mathbb{Z}^k_2 \rightarrow \mathbb{Z}^k_2$.

Henceforth, instead of using the notation $h_i$, we will use a more precise notation $h^{(j)}_{\pi_i}$ which specifies the function $a_{p-1}\oplus z_i\cdot (a_0,\ldots,a_{p-2})$ with respect to relation (\ref{a}), i.e., the functions $h^{(j)}_{\pi_i}=a_{p-1}\oplus z_i\cdot (a_0,\ldots,a_{p-2})$ are given as
$$h^{(j)}_{\pi_i}(x,y)=(j_{p-1}\+z_i\cdot (j_0,\ldots,j_{p-2}),\sigma(x)\+z_i\cdot (\tau_0(x),\ldots,\tau_{p-2}(x)))\cdot y=(j,\pi_i(x))\cdot y,$$
where $(x,y)\in \mathbb{Z}^k_2\times \mathbb{Z}^{k+1}_2$ and $j=j_{p-1}\+z_i\cdot (j_0,\ldots,j_{p-2})\in\{0,1\}$ ($z_i\in \mathbb{Z}^{p-1}_2$).

In order to fulfill the primary condition of Theorem \ref{th1}-(ii), i.e., to have an affine space of semi-bent functions $\Lambda=a_{p-1}\+z_i\cdot (a_0,\ldots,a_{p-2}),$ we will assume that $h^{(j)}_{\pi_i}$ belongs to $P^{(j)}_n$ for all $i\in[0,2^{p-1}-1]$ ($j\in\{0,1\}$).
\begin{rem}\label{diffj}
For arbitrary (fixed) integers $j_0,\ldots,j_{p-1}\in \{0,1\}$, notice that for two different vectors $z_i$ and $z_{i'}$ from $\mathbb{Z}^{p-1}_2$, we may have that $a_{p-1}\oplus z_i\cdot (a_0,\ldots,a_{p-2})\in P^{(j)}_n$ and $a_{p-1}\oplus z_{i'}\cdot (a_0,\ldots,a_{p-2})\in P^{(j')}_n$ with $j\neq j',$ since vectors $z_i$ and $z_{i'}$ are directly employed in $j=j_{p-1}\+ z_i\cdot (j_0,\ldots,j_{p-2})$ and $j'=j_{p-1}\+ z_{i'}\cdot (j_0,\ldots,j_{p-2}).$
\end{rem}
Recall that in relation (\ref{H2}) for any input vector $u\in \mathbb{Z}^n_2$ we have that half of the vector $\mathcal{W}(u)$ is a non-zero vector, and the remaining half is equal to the zero vector $\textbf{0}_{2^{p-2}}.$ Therefore, to satisfy further the relation (\ref{H2}), Proposition \ref{pp0} implies that the integer $j$ in function $h^{(j)}_{\pi_i}$ must be fixed for all $i\in[0,2^{p-2}-1]$ or for all $i\in[2^{p-2},2^{p-1}-1]$ (unlike the case mentioned in Remark \ref{diffj}), depending on vector $u\in\mathbb{Z}^n_2.$ More precisely, let us assume that $j=j_{p-1}\+z_i\cdot (j_0,\ldots,j_{p-2})\in\{0,1\}$ is fixed (the same) in functions $h^{(j)}_{\pi_i}\in P^{(j)}_n$ for all $i\in[0,2^{p-2}-1]$ (with some $j_0,\ldots,j_{p-1}\in \{0,1\}$). For an arbitrary vector $u=(\omega_1,\omega_2)\in \mathbb{Z}^{k}_2\times \mathbb{Z}^{k+1}_2$, where $\omega_2=(t,\omega_2')\in\mathbb{Z}^{k+1}_2,$ $t\in\{0,1\}$, Proposition \ref{pp0} implies  that the first half of the vector $\mathcal{W}(u)$ (in relation (\ref{H2})) is given as
\begin{eqnarray}\label{piinverz}
(W_{h^{(j)}_{\pi_0}}(u),\ldots,W_{h^{(j)}_{\pi_{2^{p-2}-1}}}(u))=\left\{\begin{array}{cc}
                                                                          \pm 2^{\frac{n+1}{2}}((-1)^{\omega_1\cdot \pi^{-1}_{0}(\omega_2')},\ldots,(-1)^{\omega_1\cdot \pi^{-1}_{2^{p-2}-1}(\omega_2')}), & t=j\\
                                                                          \textbf{0}_{2^{p-2}} & t\neq j
                                                                        \end{array}
\right..
\end{eqnarray}
On the other hand, fixing $j'=j_{p-1}\+z_i\cdot (j_0,\ldots,j_{p-2})\in\{0,1\}$ for all the remaining indices $i\in[2^{p-2},2^{p-1}-1],$ the second half of the vector $\mathcal{W}(u)$ is given as
\begin{eqnarray}\label{piinverz2}
(W_{h^{(j')}_{\pi_{2^{p-2}}}}(u),\ldots,W_{h^{(j')}_{\pi_{2^{p-1}-1}}}(u))=\left\{\begin{array}{cc}
                                                                          \pm 2^{\frac{n+1}{2}}((-1)^{\omega_1\cdot \pi^{-1}_{2^{p-2}}(\omega_2')},\ldots,(-1)^{\omega_1\cdot \pi^{-1}_{2^{p-1}-1}(\omega_2')}), & t=j'\\
                                                                          \textbf{0}_{2^{p-2}} & t\neq j'
                                                                        \end{array}
\right..
\end{eqnarray}
The disjoint spectra property in relation (\ref{H2}) is  described through equality $W_{h^{(j)}_{\pi_i}}(u)W_{h^{(j')}_{\pi_l}}(u)=0$, for any two integers $i\in[0,2^{p-2}-1]$  and $l\in [2^{p-2},2^{p-1}-1].$ Obviously, this property is satisfied in relations (\ref{piinverz}) and (\ref{piinverz2}) if and only if it holds that $j'=j\+ 1,$ due to Proposition \ref{pp2}. However, notice that  $j'$ depends on $j$ and  the function $a_{p-2}$, due to the fact that $a_{p-2}$ is present in all functions $h^{(j')}_{\pi_l}$, for $l\in[2^{p-2},2^{p-1}-1]$. In particular, writing the index $l$ as $l=i+2^{p-2}$ it holds that
$$h^{(j')}_{\pi_l}=h^{(j')}_{\pi_{i+2^{p-2}}}=h^{(j)}_{\pi_i}\+ a_{p-2},\;\;\forall i\in[0,2^{p-2}-1],$$
due to the lexicographic ordering of  $\mathbb{Z}^{p-1}_2.$ Hence, the disjoint spectra property is  fulfilled if and only if $h^{(j)}_{\pi_i}\in P^{(j)}_n$ for all $i\in[0,2^{p-2}-1]$, when  $j$ is fixed, and in addition it is necessary to select $a_{p-2}\in P^{(j\+1)}_n$ or $a_{p-2}\in R^{(j\+1)}_n$ so that $h^{(j')}_{\pi_{i+2^{p-2}}}=h^{(j)}_{\pi_i}\+ a_{p-2}$ belongs to $P^{(j\+1)}_n$ ($j'=j\+1$), for all $i\in[0,2^{p-2}-1].$

Assuming that  the disjoint spectra property is satisfied (through a proper selections of $\sigma,\tau_0,\ldots,\tau_{k-2}$), the condition (\ref{H2}) will be fully satisfied if permutations $\pi_0,\ldots,\pi_{2^{p-1}-1}$ (defined by (\ref{pi})) satisfy the relations (\ref{e1}) and (\ref{e2}). In other words, we need to provide a method of construction of these permutations for which in relations (\ref{piinverz}) and (\ref{piinverz2}) it holds that
\begin{eqnarray}\label{HC}
((-1)^{\omega_1\cdot \pi^{-1}_{0+z\cdot 2^{p-2}}(\omega_2')},\ldots,(-1)^{\omega_1\cdot \pi^{-1}_{2^{p-2}-1+z\cdot 2^{p-2}}(\omega_2')})=\pm H^{(r_z)}_{2^{p-2}},
\end{eqnarray}
for both $z=0,1$ and some $0\leq r_z\leq 2^{p-2}-1.$ Firstly, with the following  result we constrain the choice of permutations $\pi_i$ satisfying the  relations (\ref{piinverz}) and (\ref{piinverz2}).
\begin{lemma}\label{Hrow}
Let $\delta_i:\mathbb{Z}^k_2\rightarrow \mathbb{Z}_2,$ for $i=0,\ldots,2^m-1.$ If for a fixed $x\in \mathbb{Z}^k_2$ the equality
$$((-1)^{\delta_0(x)},\ldots,(-1)^{\delta_{2^m-1}(x)})=\pm H^{(r)}_{2^m},$$
holds for some $r\in\{0,\ldots,2^m-1\},$ then there exist $a,b\in \mathbb{Z}^m_2$ so that
\begin{eqnarray}\label{hadamard}
(\delta_0(x),\ldots,\delta_{2^m-1}(x))=(a\cdot (z_0\oplus b),\ldots,a\cdot (z_{2^m-1}\oplus b)).
\end{eqnarray}
\end{lemma}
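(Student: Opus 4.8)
The plan is to recognize the rows of the Sylvester--Hadamard matrix as sign vectors of \emph{linear} Boolean functions, and then read the claimed affine representation off directly. The first and main step is to establish that the entry of $H_{2^m}$ in row $r$ and column $i$ equals $(-1)^{z_r\cdot z_i}$, where $z_r,z_i\in\mathbb{Z}^m_2$ are the binary expansions of $r$ and $i$ under the ordering fixed in Section~\ref{sec:pre}. I would prove this by induction on $m$ using the recursive block definition of $H_{2^m}$: writing $z_r=(z_r',r_{m-1})$ and $z_i=(z_i',i_{m-1})$, so that the top bit $i_{m-1}$ splits the index range into its two halves, the three blocks carrying a $+$ sign correspond exactly to $r_{m-1}i_{m-1}=0$, where $z_r\cdot z_i=z_r'\cdot z_i'$, while the bottom-right block carrying the $-$ sign corresponds to $r_{m-1}i_{m-1}=1$, where $z_r\cdot z_i=z_r'\cdot z_i'\oplus 1$. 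Hence the inductive hypothesis $(H_{2^{m-1}})_{r',i'}=(-1)^{z_r'\cdot z_i'}$ propagates the identity to $H_{2^m}$.

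With this identification in hand, the hypothesis translates immediately. Saying that $((-1)^{\delta_0(x)},\ldots,(-1)^{\delta_{2^m-1}(x)})=\pm H^{(r)}_{2^m}$ means that there is a single global sign $\varepsilon\in\{+1,-1\}$, independent of $i$, with $(-1)^{\delta_i(x)}=\varepsilon\,(-1)^{z_r\cdot z_i}$ for every $i\in\{0,\ldots,2^m-1\}$. Reading exponents modulo $2$, this is the Boolean identity $\delta_i(x)=z_r\cdot z_i\oplus c$, where $c\in\{0,1\}$ is determined by $(-1)^c=\varepsilon$. It then remains only to cast $z_r\cdot z_i\oplus c$ in the required shape $a\cdot(z_i\oplus b)=a\cdot z_i\oplus a\cdot b$. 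The natural choice is $a=z_r$, which already reproduces the linear part; the constant $c$ is absorbed by selecting $b$ with $z_r\cdot b=c$. For $c=0$ one takes $b=\mathbf{0}$, and for $c=1$ one picks any $b$ with $z_r\cdot b=1$, which exists precisely because $z_r\neq\mathbf{0}$.

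The only genuine work is the inductive verification of the row formula $(H_{2^m})_{r,i}=(-1)^{z_r\cdot z_i}$; once it is available, the rest is a one-line translation, and no estimates are involved. The point to watch is the treatment of the overall sign: a $+$ row is a linear function and poses no difficulty, whereas a $-$ row forces the nonzero constant $c=1$, which is realizable as $a\cdot b$ only when $a=z_r$ is nonzero. Thus the argument covers every $r$ except the degenerate index $r=0$ paired with the negative sign, i.e.\ the negated all-ones row, which is not of the affine form $a\cdot(z_i\oplus b)$. I would either observe that this configuration does not arise among the vectors $\mathcal{W}(u)$ produced by the permutations $\pi_i$, so that it may be safely excluded, or record it explicitly as the single boundary case. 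Apart from this subtlety, the proof is purely combinatorial.
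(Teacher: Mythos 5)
Your proof is correct and takes essentially the same route as the paper: the paper's one-line proof rests on exactly the fact you establish, namely that row $r$ of $H_{2^m}$ is the sign vector of the linear function $z\mapsto z_r\cdot z$, with the minus sign absorbed by choosing $b$ such that $a\cdot b=1$; you merely supply the inductive verification of $(H_{2^m})_{r,i}=(-1)^{z_r\cdot z_i}$, which the paper treats as known.

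Your closing remark is in fact a point where you are more careful than the paper. For $r=0$ together with the minus sign, the vector $-H^{(0)}_{2^m}$ corresponds to the constant function $1$, which is \emph{not} of the form $a\cdot(z\oplus b)$ (any such function vanishes at $z=b$), so the lemma as stated, and the paper's proof of it, silently exclude this degenerate case: the prescription ``choose $b$ with $a\cdot b=1$'' is vacuous when $a=\mathbf{0}$. The gap is harmless for the intended application, since in Proposition~\ref{const} the lemma is really invoked in the converse direction (an affine exponent $a\cdot z_i\oplus b$ forces the sign vector to be $\pm H^{(r)}_{2^{p-2}}$), where the case $a=\mathbf{0}$, $b=1$ simply yields $-H^{(0)}_{2^{p-2}}$ and causes no trouble; either of your two proposed fixes is adequate.
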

\begin{proof}
The proof follows from the fact that any row of  $H_{2^m}$ corresponds to a linear function $l_a\in \mathcal{B}_m$, say $l_a(z)=a\cdot z,$ and the minus sign "$-$" is valid for any $b$ such that $a \cdot b=1$.
\qed
\end{proof}
The  result below gives a general method for constructing  permutations $\pi_i$ defined by (\ref{pi}) for which  (\ref{HC}) holds for both $z=0,1$.
\begin{prop}\label{const}
Let the mappings $\sigma,\tau_0,\ldots,\tau_{p-2}:\mathbb{Z}^k_2\rightarrow \mathbb{Z}^k_2$ used in  (\ref{a}) and (\ref{pi}) be defined as
$$\sigma(x)=xS\oplus d,\;\;\tau_c(x)=v^{(c)},\;\;c\in[0,p-2],\;\;\forall x\in \mathbb{Z}^k_2,$$
where $S\in GL(\mathbb{Z}^k_2)$ is an arbitrary matrix in the group of all invertible $k \times k$ binary matrices  and  $d,v^{(c)}\in\mathbb{Z}^k_2$ are arbitrary (fixed) vectors. Then, the relation (\ref{HC}) holds for both $z=0,1$.
%
\end{prop}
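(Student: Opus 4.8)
The plan is to exploit the special structure imposed on $\sigma,\tau_0,\ldots,\tau_{p-2}$: making $\sigma$ affine and every $\tau_c$ constant turns each $\pi_i$ into an affine permutation whose inverse is again affine, and this forces the exponent $\omega_1\cdot\pi_i^{-1}(\omega_2')$ appearing in (\ref{HC}) to depend on the index $i$ only through a \emph{linear} function of the bit-vector $z_i$. Recognising that this linearisation is what must be arranged is the conceptual crux; once it is in place, the remainder is routine bookkeeping with the ordering of $\mathbb{Z}^{p-1}_2$.

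First I would substitute the prescribed forms into (\ref{pi}). Since $\sigma(x)=xS\oplus d$ and each $\tau_c\equiv v^{(c)}$ is constant, one obtains $\pi_i(x)=xS\oplus w_i$ with $w_i=d\oplus\bigoplus_{c=0}^{p-2}(z_i)_c\,v^{(c)}$, where $(z_i)_c$ denotes the $c$-th coordinate of $z_i\in\mathbb{Z}^{p-1}_2$. Because $S\in GL(\mathbb{Z}^k_2)$, each $\pi_i$ is an affine bijection (hence a genuine permutation, as required for $h^{(j)}_{\pi_i}\in P^{(j)}_n$), and its inverse is $\pi_i^{-1}(y)=yS^{-1}\oplus w_iS^{-1}$. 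The invertibility of $S$ is precisely what guarantees that $\pi_i^{-1}$ is well defined here.

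Next I would expand the exponent using bilinearity of the dot product over $\mathbb{Z}_2$. A short computation gives
$$\omega_1\cdot\pi_i^{-1}(\omega_2')=\beta\oplus\bigoplus_{c=0}^{p-2}(z_i)_c\,\alpha_c=\beta\oplus\alpha\cdot z_i,$$
where $\beta=\omega_1\cdot(\omega_2'S^{-1})\oplus\omega_1\cdot(dS^{-1})$ and $\alpha_c=\omega_1\cdot(v^{(c)}S^{-1})$, so that $\alpha=(\alpha_0,\ldots,\alpha_{p-2})\in\mathbb{Z}^{p-1}_2$. The point is that $\beta$ and $\alpha$ depend on $(\omega_1,\omega_2')$ but not on $i$: the entire $i$-dependence sits in the linear term $\alpha\cdot z_i$.

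Finally I would split the index as $i=i'+z\cdot 2^{p-2}$, $i'\in[0,2^{p-2}-1]$, $z\in\{0,1\}$. By the lexicographic ordering the top coordinate of $z_i$ equals $z$, while its remaining coordinates form $z'_{i'}\in\mathbb{Z}^{p-2}_2$, which runs over all of $\mathbb{Z}^{p-2}_2$ as $i'$ ranges over $[0,2^{p-2}-1]$. Writing $\alpha=(\alpha',\alpha_{p-2})$ accordingly yields $\alpha\cdot z_i=\alpha'\cdot z'_{i'}\oplus\alpha_{p-2}z$, so for each fixed $z$ the exponent is $\gamma_z\oplus\alpha'\cdot z'_{i'}$ with $\gamma_z=\beta\oplus\alpha_{p-2}z$ constant in $i'$. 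Hence the left-hand side of (\ref{HC}) equals $(-1)^{\gamma_z}\bigl((-1)^{\alpha'\cdot z'_0},\ldots,(-1)^{\alpha'\cdot z'_{2^{p-2}-1}}\bigr)$, and the vector in parentheses is precisely the row of $H_{2^{p-2}}$ attached to the linear function $z\mapsto\alpha'\cdot z$ — the row--linear-function correspondence underlying Lemma \ref{Hrow}. Thus it equals $\pm H^{(r_z)}_{2^{p-2}}$, with $r_z$ the integer whose binary vector is $\alpha'$ (in fact the same row for both halves) and the sign given by $(-1)^{\gamma_z}$. This establishes (\ref{HC}) for $z=0$ and $z=1$, and the only place demanding real attention is the index/ordering bookkeeping in this last step.
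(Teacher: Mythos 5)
Your proof is correct and follows essentially the same route as the paper's: compute the affine inverse $\pi_i^{-1}(y)=(y\oplus d\oplus z_i\cdot(v^{(0)},\ldots,v^{(p-2)}))S^{-1}$, observe that $\omega_1\cdot\pi_i^{-1}(\omega_2')$ is an affine function $\alpha\cdot z_i\oplus\beta$ of $z_i$, and invoke the row--linear-function correspondence of Lemma \ref{Hrow} to identify each half of the sign vector with $\pm H^{(r)}_{2^{p-2}}$. The only cosmetic difference is that you treat both halves $z=0,1$ simultaneously via the split $i=i'+z\cdot 2^{p-2}$ (even noting $r_0=r_1$), where the paper argues w.l.o.g. for $z=0$ and appeals to symmetry.
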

\begin{proof}
Let $d,v^{(c)}\in\mathbb{Z}^k_2$ be arbitrary (fixed) vectors and $S\in GL(\mathbb{Z}^k_2)$ be  any invertible matrix. Let also $u=(\omega_1,\omega_2)\in \mathbb{Z}^k_2\times \mathbb{Z}^{k+1}_2$ be an arbitrary vector, where $\omega_2=(t,\omega'_2)$ ($t\in\{0,1\}$). W.l.o.g. we only consider  the case $z=0$ in (\ref{HC}) (which corresponds to (\ref{e1})), since the same arguments apply  to the case $z=1$ (which corresponds to  (\ref{e2})). Equivalently, $z=0$ means that we are considering the case when $t=j$ (the first equation in  (\ref{piinverz})).

Being  linear permutations on  $\mathbb{Z}^k_2,$  the inverse of $\pi_i(x)=xS\oplus d\+ z_i\cdot (v^{(0)},v^{(1)},\ldots,v^{(p-2)})$ is given as
\begin{eqnarray}\label{piinv}
\pi^{-1}_i(x)=(x\+d\oplus z_i\cdot (v^{(0)},v^{(1)},\ldots,v^{(p-2)}))S^{-1},\;\;\;\forall i\in[0,2^{p-1}-1],\;\;\forall x\in \mathbb{Z}^k_2.
\end{eqnarray}
Hence, using (\ref{piinv}) and denoting by $a=(\omega_1\cdot v^{(0)}S^{-1},\ldots,\omega_1\cdot v^{(p-2)}S^{-1})\in \mathbb{Z}^k_2$ and $b=\omega_1\cdot (\omega'_2\+ d)S^{-1}\in \{0,1\}$, it is not difficult to see that for any $i\in[0,2^{p-1}-1]$ the term $\omega_1\cdot \pi^{-1}_i(\omega'_2)$, which occurs in (\ref{piinverz}) and (\ref{piinverz2}), for any $\omega'_2\in \mathbb{Z}^k_2$ can be written as
$$\omega_1\cdot \pi^{-1}_i(\omega'_2)=a\cdot z_i\+ b,$$
 Consequently, Lemma \ref{Hrow} implies that
$$((-1)^{\omega_1\cdot \pi^{-1}_0(\omega'_2)},\ldots,(-1)^{\omega_1\cdot \pi^{-1}_{2^{p-2}-1}(\omega'_2)})=(-1)^{b}((-1)^{a\cdot z_0},\ldots,(-1)^{a\cdot z_{2^{p-2}-1}})=\pm H^{(r)}_{2^{p-2}},$$
for some $0\leq r\leq 2^{p-2}-1,$ which means that relation (\ref{HC}) holds for $z=0$. Using the same arguments, the relation (\ref{HC}) also holds for $z=1$, which completes the proof.\qed
\end{proof}
\begin{rem}\label{tau_i}
One may notice that in Proposition \ref{const}, if $p-1>2^k$ then some mappings $\tau_i=v^{(i)}\in \mathbb{Z}^k_2$ will be the same (assuming $p$ is fixed in (\ref{eq:1})). However, if $p-1\leq 2^k$ then all mappings $\tau_i$ can be defined to be pairwise different. Moreover, for $p-1\leq k$  the affine space $\Lambda=a_{p-1}\+ \langle a_0,\ldots,a_{p-2}\rangle$ may have the full dimension $p-1$ if the vectors $v^{(0)},\ldots,v^{(p-2)}\in \mathbb{Z}^{k}_2$ constitute a basis of $\mathbb{Z}^k_2$.
\end{rem}
The results/discussions from this subsection allow us to formalize the generic construction method for gbent functions, which is given with the following steps.\\\\
\textbf{Construction 1:} Let $f:\mathbb{Z}^{n}_2\rightarrow \mathbb{Z}_{2^p}$ be defined by (\ref{eq:1}), where $n=2k+1$ ($k\in \mathbb{N}$) and $p\geq 2,$ and let the coordinate functions $a_0,\ldots,a_{p-1}$ be defined by (\ref{a}). The function $f$ is  gbent if its coordinate functions are selected as follows:
\begin{enumerate}[(1)]
\item Select the  corresponding permutations $\sigma,\tau_0,\ldots,\tau_{p-2}$ as defined  in Proposition \ref{const}.
\item With respect to the previous step, set $a_{p-1}\in P^{(j)}_n$ for any $j\in\{0,1\}$, $a_{0},\ldots,a_{p-3}\in R^{(0)}_n$ and $a_{p-2}\in R^{(1)}_n.$
\end{enumerate}
\begin{rem}
Note that the first construction step above ensures that $\Lambda=a_{p-1}\+ \langle a_0,\ldots,a_{p-2}\rangle$ is an affine space of semi-bent functions, for which (\ref{e1}) and (\ref{e2}) are satisfied. The second step ensures the disjoint spectra property in relation (\ref{H2}), thus all functions $a_{p-1}\+ z_i\cdot (a_0,\ldots,a_{p-2})\in P^{(j)}_n$ for all $i\in[0,2^{p-2}-1]$ and $a_{p-1}\+ z_l\cdot (a_0,\ldots,a_{p-2})\in P^{(j\+ 1)}_n$ for all $l\in[2^{p-2},2^{p-1}-1].$
\end{rem}

\subsection{The construction when $n$ is even}

In general, our method of constructing  gbent functions for $n$ odd,  summarized in \textbf{Construction 1}, heavily  relies on Propositions \ref{pp2} and \ref{const}. Nevertheless, assuming that the coordinate functions $a_0,\ldots,a_{p-1}$ (and thus the function $f$ given by (\ref{eq:1})) are defined on $\mathbb{Z}^k_2\times\mathbb{Z}^k_2$ implies that the  $n$  even case can be treated quite similarly.
 Indeed,  considering Proposition \ref{const} as a method of selecting the coordinate functions $a_0,\ldots,a_{p-1}$, then all functions $h_i=a_{p-1}\+ z_i\cdot (a_0,\ldots,a_{p-2})$ (now defined on $\mathbb{Z}^k_2\times\mathbb{Z}^k_2$) will belong to the MM-class of bent Boolean functions, since $a_{p-1}(x,y)=\sigma(x)\cdot y$ is a bent function, and $a_{c}(x,y)=\tau_c(x)\cdot y=v^{(c)}\cdot y$, where $v^{(c)}\in \mathbb{Z}^k_2$ and $c\in[0,p-2]$, are linear functions. The resulting gbent function $f$, given  as
$$f(x,y)=v^{(0)}\cdot y+2v^{(1)}\cdot y+\ldots+2^{p-2}v^{(p-2)}\cdot y+2^{p-1}\sigma(x)\cdot y=g(y)+2^{p-1}\sigma(x)\cdot y,$$
will belong to the GMMF-class of gbent functions. Note that  in \cite[Prposition 1]{SH2} it has been shown that all functions within the GMMF-class satisfy the condition (\ref{H1}).

\section{Illustrating the construction details - an example}  \label{sec:genconc}

In what follows, we illustrate the use of construction steps in \textbf{Construction 1} for providing an example of a gbent function, for odd $n$. Hence, let us consider a generalized function $f:\mathbb{Z}^5_2\rightarrow \mathbb{Z}_{32}$ ($n=5=2k+1$, $q=32$) given as
$$f(x)=a_0(x)+2a_1(x)+4a_2(x)+8a_3(x)+16a_4(x).$$
Recall that the function $f$ is gbent (for $n$ odd) if and only if the set $\Lambda=a_4\+\langle a_0,\ldots,a_3\rangle$ is an affine space of semi-bent functions satisfying (\ref{H2}) (see Theorem \ref{th1}).
Since $k=2$, let $\sigma,\tau_0,\ldots, \tau_3:\mathbb{Z}^2_2\rightarrow \mathbb{Z}^2_2$ correspond to the component functions $a_4,a_0,\ldots,a_3\in \mathcal{B}_5$, respectively. Using Proposition \ref{const}, we define these component functions via $\sigma,\tau_i$  so that  $f$ is a gbent function, as follows:
\begin{eqnarray*}
\sigma(x)&=&x\+ (0,1),\;\;\tau_0(x)=v^{(0)}=(1,0),\;\;\tau_1(x)=v^{(1)}=(0,1),\\
\tau_2(x)&=&v^{(2)}=(0,0),\;\;\tau_3(x)=v^{(3)}=(1,1),
\end{eqnarray*}
for every $x\in \mathbb{Z}^2_2$. Note that the permutation $\sigma(x)=xS \oplus  d$ uses the identity matrix $S$. Thus we complete the first step of \textbf{Construction 1}. Consequently, the coordinate functions are defined as
$$a_4(x,y)=(1,\sigma(x))\cdot y,\;\;a_i(x,y)=(0,\tau_i(x))\cdot y,\;\; i=0,1,2,$$
$$a_3(x,y)=(1,\tau_3(x))\cdot y,\;\;\;(x,y)\in \mathbb{Z}^2_2\times \mathbb{Z}^3_2.$$
Clearly, we have that $a_4\oplus z_i\cdot (a_0,\ldots,a_3)\in P^{(1)}_5$ for $i\in[0,7]$ and $a_4\oplus z_i\cdot (a_0,\ldots,a_3)\in P^{(0)}_5$ for $i\in[8,15],$ $z_i\in \mathbb{Z}^4_2$, thus satisfying the disjoint spectra property (the choice of $a_i$ is in accordance to the second step in \textbf{Construction 1}).
Denoting $W_{h_{\pi_i}}(u)=W_{a_{4}\oplus z_i\cdot (a_0,\ldots,a_3)}(u),$ for $u\in \mathbb{Z}^5_2,$ the vectors $\mathcal{W}(u)=(W_{h_{\pi_0}}(u),\ldots,W_{h_{\pi_{15}}}(u))$ are given in Table \ref{tab1}.
\begin{table}[H]
\scriptsize
\centering
\caption{Vectors $\mathcal{W}(u)$ for all $u\in \mathbb{Z}^5_2.$}
\vskip 0.25cm
\begin{tabular}{|c|c|c|}
  \hline
   $u\in \mathbb{Z}^5_2$ & $\mathcal{W}(u)=(W_{h_{\pi_0}}(u),\ldots,W_{h_{\pi_{15}}}(u))$ & $\mathcal{W}(u)=\{\textbf{0}_{2^3},\pm 8 H^{(r)}_{2^3}\}\;\;\text{or}\;\; W^T=\{\pm 8 H^{(r)}_{2^3},\textbf{0}_{2^3}\}$ \\ \hline \hline
  $u_0$ & $\{0, 0, 0, 0, 0, 0, 0, 0, 8, 8, 8, 8, 8, 8, 8, 8\}$ & $\{\textbf{0}_{2^3},\;8H^{(0)}_{2^3}\}$ \\ \hline
  $u_1$ & $\{0, 0, 0, 0, 0, 0, 0, 0, -8, 8, -8, 8, -8, 8, -8, 8\}$ & $\{\textbf{0}_{2^3},\;-8H^{(1)}_{2^3}\}$ \\ \hline
  $u_2$ & $\{0, 0, 0, 0, 0, 0, 0, 0, 8, 8, -8, -8, 8, 8, -8, -8\}$ & $\{\textbf{0}_{2^3},\;8H^{(2)}_{2^3}\}$ \\ \hline
  $u_3$ & $\{0, 0, 0, 0, 0, 0, 0, 0, -8, 8, 8, -8, -8, 8, 8, -8\}$ & $\{\textbf{0}_{2^3},\;-8H^{(3)}_{2^3}\}$ \\ \hline
  $u_4$ & $\{8, 8, 8, 8, 8, 8, 8, 8, 0, 0, 0, 0, 0, 0, 0, 0\}$ & $\{8H^{(0)}_{2^3},\;\textbf{0}_{2^3}\}$ \\ \hline
  $u_5$ & $\{8, -8, 8, -8, 8, -8, 8, -8, 0, 0, 0, 0, 0, 0, 0, 0\}$ & $\{8H^{(1)}_{2^3},\;\textbf{0}_{2^3}\}$ \\ \hline
  $u_6$ & $\{-8, -8, 8, 8, -8, -8, 8, 8, 0, 0, 0, 0, 0, 0, 0, 0\}$ & $\{-8H^{(2)}_{2^3},\;\textbf{0}_{2^3}\}$ \\ \hline
  $u_7$ & $\{-8, 8, 8, -8, -8, 8, 8, -8, 0, 0, 0, 0, 0, 0, 0, 0\}$ & $\{-8H^{(3)}_{2^3},\;\textbf{0}_{2^3}\}$ \\ \hline
  $u_8$ & $\{0, 0, 0, 0, 0, 0, 0, 0, 8, 8, 8, 8, 8, 8, 8, 8\}$ & $\{\textbf{0}_{2^3},\;8H^{(0)}_{2^3}\}$ \\ \hline
  $u_9$ & $\{0, 0, 0, 0, 0, 0, 0, 0, 8, -8, 8, -8, 8, -8, 8, -8\}$ & $\{\textbf{0}_{2^3},\;8H^{(1)}_{2^3}\}$ \\ \hline
  $u_{10}$ & $\{0, 0, 0, 0, 0, 0, 0, 0, 8, 8, -8, -8, 8, 8, -8, -8\}$ & $\{\textbf{0}_{2^3},\;8H^{(2)}_{2^3}\}$ \\ \hline
  $u_{11}$ & $\{0, 0, 0, 0, 0, 0, 0, 0, 8, -8, -8, 8, 8, -8, -8, 8\}$ & $\{\textbf{0}_{2^3},\;8H^{(3)}_{2^3}\}$ \\ \hline
  $u_{12}$ & $\{8, 8, 8, 8, 8, 8, 8, 8, 0, 0, 0, 0, 0, 0, 0, 0\}$ & $\{8H^{(0)}_{2^3},\;\textbf{0}_{2^3}\}$ \\ \hline
  $u_{13}$ & $\{-8, 8, -8, 8, -8, 8, -8, 8, 0, 0, 0, 0, 0, 0, 0, 0\}$ & $\{-8H^{(1)}_{2^3},\;\textbf{0}_{2^3}\}$ \\ \hline
  $u_{14}$ & $ \{-8, -8, 8, 8, -8, -8, 8, 8, 0, 0, 0, 0, 0, 0, 0, 0\}$ & $\{-8H^{(2)}_{2^3},\;\textbf{0}_{2^3}\}$ \\ \hline
  $u_{15}$ & $\{8, -8, -8, 8, 8, -8, -8, 8, 0, 0, 0, 0, 0, 0, 0, 0\}$ & $\{8H^{(3)}_{2^3},\;\textbf{0}_{2^3}\}$ \\ \hline
  $u_{16}$ & $ \{0, 0, 0, 0, 0, 0, 0, 0, 8, 8, 8, 8, 8, 8, 8, 8\}$ &  $\{\textbf{0}_{2^3},\;8H^{(0)}_{2^3}\}$ \\ \hline
  $u_{17}$ & $ \{0, 0, 0, 0, 0, 0, 0, 0, -8, 8, -8, 8, -8, 8, -8, 8\}$ & $\{\textbf{0}_{2^3},\;-8H^{(1)}_{2^3}\}$ \\ \hline
  $u_{18}$ & $\{0, 0, 0, 0, 0, 0, 0, 0, -8, -8, 8, 8, -8, -8, 8, 8\}$ & $\{\textbf{0}_{2^3},\;-8H^{(2)}_{2^3}\}$ \\ \hline
  $u_{19}$ & $\{0, 0, 0, 0, 0, 0, 0, 0, 8, -8, -8, 8, 8, -8, -8, 8\}$ & $\{\textbf{0}_{2^3},\;8H^{(3)}_{2^3}\}$ \\ \hline
  $u_{20}$ & $\{8, 8, 8, 8, 8, 8, 8, 8, 0, 0, 0, 0, 0, 0, 0, 0\}$ & $\{8H^{(0)}_{2^3},\;\textbf{0}_{2^3}\}$ \\ \hline
  $u_{21}$ & $\{8, -8, 8, -8, 8, -8, 8, -8, 0, 0, 0, 0, 0, 0, 0, 0\}$ & $\{8H^{(1)}_{2^3},\;\textbf{0}_{2^3}\}$  \\ \hline
  $u_{22}$ & $\{8, 8, -8, -8, 8, 8, -8, -8, 0, 0, 0, 0, 0, 0, 0, 0\}$ & $\{8H^{(2)}_{2^3},\;\textbf{0}_{2^3}\}$ \\ \hline
  $u_{23}$ & $\{8, -8, -8, 8, 8, -8, -8, 8, 0, 0, 0, 0, 0, 0, 0, 0\}$ & $\{8H^{(3)}_{2^3},\;\textbf{0}_{2^3}\}$ \\ \hline
  $u_{24}$ & $ \{0, 0, 0, 0, 0, 0, 0, 0, 8, 8, 8, 8, 8, 8, 8, 8\}$ & $\{\textbf{0}_{2^3},\;8H^{(0)}_{2^3}\}$ \\ \hline
  $u_{25}$ & $\{0, 0, 0, 0, 0, 0, 0, 0, 8, -8, 8, -8, 8, -8, 8, -8\}$ & $\{\textbf{0}_{2^3},\;8H^{(1)}_{2^3}\}$ \\ \hline
  $u_{26}$ & $\{0, 0, 0, 0, 0, 0, 0, 0, -8, -8, 8, 8, -8, -8, 8, 8\}$ & $\{\textbf{0}_{2^3},\;-8H^{(2)}_{2^3}\}$ \\ \hline
  $u_{27}$ & $\{0, 0, 0, 0, 0, 0, 0, 0, -8, 8, 8, -8, -8, 8, 8, -8\}$ & $\{\textbf{0}_{2^3},\;-8H^{(3)}_{2^3}\}$  \\ \hline
  $u_{28}$ & $\{8, 8, 8, 8, 8, 8, 8, 8, 0, 0, 0, 0, 0, 0, 0, 0\}$ & $\{8H^{(0)}_{2^3},\;\textbf{0}_{2^3}\}$ \\ \hline
  $u_{29}$ & $\{-8, 8, -8, 8, -8, 8, -8, 8, 0, 0, 0, 0, 0, 0, 0, 0\}$ & $\{-8H^{(1)}_{2^3},\;\textbf{0}_{2^3}\}$  \\ \hline
  $u_{30}$ & $\{8, 8, -8, -8, 8, 8, -8, -8, 0, 0, 0, 0, 0, 0, 0, 0\}$ & $\{8H^{(2)}_{2^3},\;\textbf{0}_{2^3}\}$ \\ \hline
  $u_{31}$ & $\{-8, 8, 8, -8, -8, 8, 8, -8, 0, 0, 0, 0, 0, 0, 0, 0\}$ & $\{-8H^{(3)}_{2^3},\;\textbf{0}_{2^3}\}$ \\
  \hline
\end{tabular}
\label{tab1}
\end{table}
Consequently, the output values of the gbent function $f$ are given by
$$\{0, 0, 0, 0, 24, 24, 24, 24, 9, 25, 9, 25, 17, 1, 17, 1, 26, 26, 10, 10, 2, 2, 18, 18, 19, 3, 3, 19, 11, 27, 27, 11\}.$$

\section{Conclusions}\label{sec:conc}
In this article we have proposed a generic method for constructing gbent functions $f:\Z_2^n \rightarrow \Z_q$. The method presented here covers the case $n$ even completely since a gbent function can be specified for any even $q>2,$ and any odd $n$ for $q$ being a power of $2$. The problem of finding other methods for constructing gbent functions, different to those presented in this article, is left as an interesting research challenge.\\\\
\noindent
{\bf Acknowledgement.} Samir Hod\v zi\' c  is supported in part by the Slovenian Research Agency (research program P3-0384 and Young Researchers Grant). Enes Pasalic is partly supported  by the Slovenian Research Agency
(research program P3-0384 and research project J1-6720).

\end{document}